\documentclass[11pt,onecolumn,journal]{IEEEtran}
\usepackage[mathscr]{eucal}
\usepackage[cmex10]{amsmath}
\usepackage{amssymb,amsmath,amsthm,amsfonts,latexsym}
\usepackage{amsmath,graphicx}
\usepackage{verbatim}
\usepackage{bm}
\usepackage{algorithmic, cite}
\usepackage{algorithm}
\usepackage{verbatim}
\usepackage{textcomp}
\usepackage{mathrsfs}
\usepackage{subcaption}
\usepackage{epsfig,color}
\usepackage{hyperref}
\usepackage{epstopdf}
\usepackage{setspace}
\usepackage{fancyhdr}

\catcode`~=11 \def\UrlSpecials{\do\~{\kern -.15em\lower .7ex\hbox{~}\kern .04em}} \catcode`~=13

\allowdisplaybreaks[1]


\newcommand{\calF}{\mathcal{F}}

\newcommand{\calR}{\mathcal{R}}
\newcommand{\calS}{\mathcal{S}}


\newcommand{\bA}{\mathbf{A}}

\newcommand{\bG}{\mathbf{G}}

\newcommand{\bI}{\mathbf{I}}

\newcommand{\bQ}{\mathbf{Q}}

\newcommand{\bU}{\mathbf{U}}

\newcommand{\bV}{\mathbf{V}}



\newcommand{\bbR}{\mathbb{R}}



\DeclareMathAlphabet{\mathbsf}{OT1}{cmss}{bx}{n}
\DeclareMathAlphabet{\mathssf}{OT1}{cmss}{m}{sl}

\newcommand{\rvE}{\mathsf{E}}

\newcommand{\rvP}{\mathsf{P}}

\DeclareSymbolFont{bsfletters}{OT1}{cmss}{bx}{n}
\DeclareSymbolFont{ssfletters}{OT1}{cmss}{m}{n}
\DeclareMathSymbol{\bsfGamma}{0}{bsfletters}{'000}
\DeclareMathSymbol{\ssfGamma}{0}{ssfletters}{'000}
\DeclareMathSymbol{\bsfDelta}{0}{bsfletters}{'001}
\DeclareMathSymbol{\ssfDelta}{0}{ssfletters}{'001}
\DeclareMathSymbol{\bsfTheta}{0}{bsfletters}{'002}
\DeclareMathSymbol{\ssfTheta}{0}{ssfletters}{'002}
\DeclareMathSymbol{\bsfLambda}{0}{bsfletters}{'003}
\DeclareMathSymbol{\ssfLambda}{0}{ssfletters}{'003}
\DeclareMathSymbol{\bsfXi}{0}{bsfletters}{'004}
\DeclareMathSymbol{\ssfXi}{0}{ssfletters}{'004}
\DeclareMathSymbol{\bsfPi}{0}{bsfletters}{'005}
\DeclareMathSymbol{\ssfPi}{0}{ssfletters}{'005}
\DeclareMathSymbol{\bsfSigma}{0}{bsfletters}{'006}
\DeclareMathSymbol{\ssfSigma}{0}{ssfletters}{'006}
\DeclareMathSymbol{\bsfUpsilon}{0}{bsfletters}{'007}
\DeclareMathSymbol{\ssfUpsilon}{0}{ssfletters}{'007}
\DeclareMathSymbol{\bsfPhi}{0}{bsfletters}{'010}
\DeclareMathSymbol{\ssfPhi}{0}{ssfletters}{'010}
\DeclareMathSymbol{\bsfPsi}{0}{bsfletters}{'011}
\DeclareMathSymbol{\ssfPsi}{0}{ssfletters}{'011}
\DeclareMathSymbol{\bsfOmega}{0}{bsfletters}{'012}
\DeclareMathSymbol{\ssfOmega}{0}{ssfletters}{'012}



\newcommand{\bSigma	}{\bm{\Sigma}}

\newcommand{\bPhi}{\bm{\Phi}}







\newtheorem{theorem}{Theorem}

\newtheorem{corollary}[theorem]{Corollary}

\newcommand{\qednew}{\nobreak \ifvmode \relax \else
      \ifdim\lastskip<1.5em \hskip-\lastskip
      \hskip1.5em plus0em minus0.5em \fi \nobreak
      \vrule height0.75em width0.5em depth0.25em\fi}

\pdfoutput=1 

\doublespacing
\title{Change Detection with Compressive Measurements}
\author{George K. Atia, {\em Member, IEEE}
\thanks{This work was supported in part by NSF Grant CCF-1320547.}
\thanks {George K. Atia is with the Department of Electrical Engineering and Computer Science, University of Central Florida.  Email: george.atia@ucf.edu.} }
\begin{document}
\maketitle
\begin{abstract}
Quickest change point detection is concerned with the detection of statistical change(s) in sequences while minimizing the detection delay subject to false alarm constraints. In this paper, the problem of change point detection is studied when the decision maker only has access to \emph{compressive} measurements. First, an expression for the average detection delay of Shiryaev's procedure with compressive measurements is derived in the asymptotic regime where the probability of false alarm goes to zero. Second, the dependence of the delay on the compression ratio and the signal to noise ratio is explicitly quantified. The ratio of delays with and without compression is studied under various sensing matrix constructions, including Gaussian ensembles and random projections. For a target ratio of the delays after and before compression, a sufficient condition on the number of measurements required to meet this objective with prespecified probability is derived.
\end{abstract}
\begin{IEEEkeywords}
Quickest change detection, Compressive measurements, Concentration inequalities
\end{IEEEkeywords}

\section{Introduction}
Massive amounts of heterogeneous and multi-dimensional data are generated on a daily basis. An important focus of recent research to deal with this data deluge has been on efficient data collection, storage and acquisition without throwing away information. As such, the last decade has witnessed significant developments in the design of clever sampling solutions such as compressive sampling \cite{dono-ieeetit-2006,cande-icm-2006}. These are techniques that characterize situations where one can find solutions to under-determined linear equations. Compressive sampling is rooted in the fact that many observed
signals are sparse (or compressible) in some known basis (dictionary). This sparseness is exploited to reconstruct the entire signal from relatively few measurements through design of sampling
matrices with distance-preserving properties \cite{cande-romb-tao-pam-2006}. These ideas have been used in a number of applications such as improving storage in computer networks \cite{indy-procieee-2010}, acquisition time in MRIs \cite{lust-dono-paul-mrm-2007,lust-et-al-ieeespm-2008}, and reduced radiation dose in X-ray CT \cite{x-ray,li-luo-2011}.

The vast majority of the work in this area has focused on signal sampling for reconstruction. Nevertheless, there is a also a growing interest to develop signal processing techniques that work directly on compressive measurements when the goal of the inference task does not necessarily require the reconstruction of the signal. For such tasks, the relevant performance metric could be different from the mean square error of the signal estimate. For example, the problem of signal detection from compressive measurements was considered in \cite{dave-et-al-ieeestsp-2010,duar-et-al-icassp-2006}. It was shown that further compression gains are achievable since we do not wish to reconstruct the signal, but rather care about minimizing the probability of misclassification. The authors in \cite{farh-hugh-icassp-2014} considered the problem of recovery of principal components from compressive measurements. This approach was used to estimate the parameters of Gaussian Mixture models directly from compressed data.

In this paper we consider the problem of change detection from compressive measurements. Change detection aims to detect statistical changes in data while minimizing the detection delay subject to false alarm constraints \cite{bass-book-1993}. This problem arises in various applications including anomaly/intrusion detection, surveillance systems, and structural health monitoring \cite{tart-et-al-tsp-2006,SHM-book-2006}. An extensive body of work in sequential analysis has been devoted to understand the fundamental delay/false alarm tradeoff and various formulations such as the minimax \cite{Lorden,poll-anstat-1985,mous-anstat-1986} and the Bayesian forumlations \cite{shir-book-1978,tart-veer-thpa-2005,veer-tart-itw-2002} were considered. The performances of various change point algorithms were further analyzed in the asymptotic setting where the probability of false alarm goes to zero \cite{tart-veer-thpa-2005,lai-ieeetit-1998}. The main questions that this paper seeks to address is whether and how change detection can be achieved using compressive measurements and what the associated performance is. We derive bounds on the Average Detection Delay (ADD) of the Shiryaev's procedure with compressive measurements using different matrix constructions, including sensing matrices drawn from a Gaussian ensemble and random projections in the asymptotic regime of vanishing false alarm probability. Also, we explicitly quantify the dependence of the delay on the compression ratio and the signal to noise ratio.

The rest of the paper is organized as follows. In Section \ref{sec:background} we provide preliminary background for the change point detection problem and Shiryaev's procedure. The problem setup in presented in Section \ref{sec:setup}. In Section \ref{sec:ccd}, we study the problem of change detection from compressive measurements and derive upper and lower bounds on the average detection delay in the asymptotic setting of vanishing false alarm probability. In Section \ref{sec:ccd_sparse}, we focus on compressed change detection of sparse phenomena. Numerical and simulation results are presented in Section \ref{sec:results}. We conclude in Section \ref{sec:conc}.

\section{Background: Change Detection}
\label{sec:background}
Independent and identically distributed (i.i.d.) random variables (or vectors), $Y_1, Y_2, \ldots$, from a distribution $f_0$ are observed, and at some unknown point $\lambda$ in time, the observed sequence is still i.i.d. but with a new distribution $f_1$. In the Bayesian setting \cite{tart-veer-thpa-2005}, $\lambda$ is random with probability distribution, $\pi_k = \rvP(\lambda = k)$. Change point detection aims to design a stopping rule to declare the occurrence of the change.
A stopping time $\tau$ for an observed sequence $\{Y_n\}_{n\geq 1}$ is measurable if the event $\{\tau \leq n\}$ belongs to the sigma algebra $\calF_n= \sigma(Y_1, Y_2, \ldots, Y_n)$. As in \cite{tart-veer-thpa-2005}, we define the Average Detection Delay (ADD) and the Probability of False Alarm (PFA) as
\begin{align}
\mbox{ADD}(\tau) &= \rvE^{\pi}[(\tau-\lambda)^+|\tau\geq\lambda]
\label{eq:ADD}
\end{align}
\begin{align}
\mbox{PFA}(\tau)&= \rvP^{\pi}(\tau<\lambda) = \sum_{k = 1}^\infty \pi_k \rvP_k(\tau < k).
\label{eq:PFA}
\end{align}
$\rvP^{\pi}$ and $\rvE^{\pi}$ denote the average probability measure (w.r.t. $\pi$) and expectation, and $(\tau-\lambda)^+ = \max(\tau-\lambda,0)$.

Defining the two hypotheses, $H_0 : \lambda > n$ and $H_1: \lambda\leq n$, it is not hard to show that the Likelihood Ratio (LR), $\Lambda_n$, for these two hypotheses is
\begin{align}
\Lambda_n = \frac{\pi_0}{1-\pi_0} + \frac{1}{\rvP(\lambda>n)}\sum_{k=1}^n \pi_k \prod_{t=k}^n \frac{f_1(y_t)}{f_0(y_t)}.
\label{eq:LR}
\end{align}

The optimal change-point detection procedure aims to minimize ADD subject to a constraint on PFA. In particular, subject to the constraint $\mbox{PFA}\leq\alpha$, Shiryaev showed that it is optimal to stop at the first time $\nu_A$ that $\Lambda_n$ exceeds a threshold $A$ that depends on $\alpha$ \cite{shir-book-1978}. Hence,
\begin{align}
\nu_A = \inf\{n\geq 1: \Lambda_n \geq A\}
\label{eq:shiryaev}
\end{align}
where $A$ should be chosen to satisfy the false alarm constraint with equality, which may only be possible in special settings.
Setting $A=\frac{1-\alpha}{\alpha}$ guarantees that $\nu_A\in\Delta(\alpha) = \{\tau: \mbox{PFA}(\tau)\leq\alpha\}$. Subsequent work established that Shiryaev's procedure is asymptotically optimal when $\alpha\rightarrow 0$ for the aforementioned choice of the threshold $A$ \cite{tart-veer-thpa-2005}.

\section{Problem Setup}
\label{sec:setup}
In this paper, we are interested in studying the problem of change point detection in a setting where the decision maker only has access to compressive measurements. The objective is to study and quantify the effect of compression on the average detection delay, and to derive conditions under which change-point detection can be carried out efficiently with compressive measurements. Next, we introduce the problem setup.

Conditioned on a change at time $\lambda=k$, we assume that the observations before change follow the model
\begin{align}
y_t = \bPhi w_t, ~~t = 1,\ldots, k-1.
\label{eq:ObsBeforeChange}
\end{align}
After change, the observations follow the model
\begin{align}
y_t = \bPhi (s + w_t),~ t \geq k
\label{eq:ObsAfterChange}
\end{align}
where $w_t$ is i.i.d $N(0,\sigma^2 \bI_M)$. The sensing matrix $\bPhi$ is $M\times N$, with $M << N$, and $s$ is a known signal in $\calS\subseteq\bbR^N$. We assume that the decision maker does not have control over the sensing matrix $\bPhi$. We further assume that $\lambda$ is geometric with parameter $\rho$. Simplifying the expression of $\Lambda_n$ in (\ref{eq:LR}), we get
\begin{align}
\Lambda_n = \frac{\pi_0}{1-\pi_0} + \frac{1}{(1-\rho)^n}\sum_{k=1}^n \pi_k \exp(Z_n^k)
\end{align}
where
\begin{align}
Z_n^k&=\frac{1}{\sigma^2}\sum_{t=k}^n \left(y_t^T (\bPhi \bPhi^T)^{-1}\bPhi s - \frac{1}{2}s^T\bQ s\right)\nonumber
\end{align}
\begin{align}
=\frac{1}{\sigma^2}\sum_{t=k}^n \left(y_t^T (\bPhi \bPhi^T)^{-1}\bPhi s - \frac{1}{2}\|\bQ s\|_2^2\right).
\end{align}
The matrix, $\bQ={\bPhi^T}(\bPhi\bPhi^T)^{-1}{\bPhi}$, is the orthogonal projection matrix on $\calR(\bPhi)$, the row space of $\bPhi$. We also observe that the statistic $\Lambda_n$ obeys the recursion
\begin{align}
\Lambda_n \hspace{-0.75mm}= \frac{1}{1-\rho}(\Lambda_{n-1} \hspace{-0.75mm}+\hspace{-0.75mm} \rho)\exp\hspace{-0.75mm}\Big\{\frac{1}{\sigma^2}&\Big(y_n^T(\bPhi\bPhi^T)^{-1}{\bPhi}s
\hspace{-0.75mm}-\hspace{-0.75mm}\frac{1}{2}\|\bQ s\|_2^2\Big)\hspace{-0.75mm}\Big\}
\end{align}

\section{Compressed Change Detection}
\label{sec:ccd}
First, we would like to characterize the performance of Shiryaev's procedure in the compressive measurements setting. In this section, we consider random constructions of the matrix $\bPhi$ and derive upper and lower bounds on $\mbox{ADD}(\nu_A)$. Second, we characterize the ratio of the delays with and without compression as a function of the compression ratio. We state the following theorem.
\begin{theorem}
\label{thm:general_matrices}
Let $\bPhi$ be an $M\times N$ random matrix with rank $M$ and \emph{unit norm rows}. Then, for any $s\in\calS$, $\nu_A$ in (\ref{eq:shiryaev}) satisfies
\begin{align}
\mbox{ADD}_{\ell}\leq\mbox{ADD}(\nu_A)\leq ADD_u, ~\mbox{as}~\alpha\rightarrow 0,
\end{align}
with probability at least $1-2e^{-cM\delta^2}$, for some constant $c > 0$ and $\delta \in (0,1)$, where
\begin{align}
\mbox{ADD}_{\ell} &= \frac{|\log\alpha|}{\frac{1}{2\sigma^2}(1+\delta)\frac{M}{N}\|s\|_2^2 + |\log(1-\rho)|}(1+o(1))\nonumber\\
\mbox{ADD}_u &= \frac{|\log\alpha|}{\frac{1}{2\sigma^2}(1-\delta)\frac{M}{N}\|s\|_2^2 + |\log(1-\rho)|}(1+o(1)),
\label{eq:ADD_bounds}
\end{align}
and $o(1)\rightarrow 0$ as $\alpha\rightarrow 0$.
\end{theorem}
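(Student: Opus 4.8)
The plan is to condition on the realized sensing matrix $\bPhi$, reduce the problem to the classical asymptotic analysis of Shiryaev's procedure, and then absorb the randomness of $\bPhi$ into a single concentration inequality. Conditioned on $\bPhi$, the observations are i.i.d.\ Gaussian vectors with common covariance $\sigma^2\bPhi\bPhi^T$ and pre/post-change means $0$ and $\bPhi s$. Hence the per-sample Kullback--Leibler divergence between the post- and pre-change laws is
\begin{align}
D(\bPhi)=\frac{1}{2\sigma^2}s^T\bQ s=\frac{1}{2\sigma^2}\|\bQ s\|_2^2,\nonumber
\end{align}
where $\bQ$ is the orthogonal projector onto $\calR(\bPhi)$, so that $s^T\bQ s=\|\bQ s\|_2^2$. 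By the asymptotic optimality theory for the Bayesian formulation with geometric prior \cite{tart-veer-thpa-2005}, the choice $A=(1-\alpha)/\alpha$ yields, conditionally on $\bPhi$,
\begin{align}
\mbox{ADD}(\nu_A)=\frac{|\log\alpha|}{D(\bPhi)+|\log(1-\rho)|}\,(1+o(1)),\quad\alpha\to 0.\nonumber
\end{align}

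The heart of the proof is then to show that $\|\bQ s\|_2^2$ concentrates around $\frac{M}{N}\|s\|_2^2$. Because the row space of $\bPhi$ is a rotationally invariant (hence uniformly distributed) $M$-dimensional subspace of $\bbR^N$, one has $\rvE[\bQ]=\frac{M}{N}\bI_N$ and therefore $\rvE[\|\bQ s\|_2^2]=\frac{M}{N}\|s\|_2^2$. For the tail I would invoke the distributional identity, valid by rotational invariance,
\begin{align}
\frac{\|\bQ s\|_2^2}{\|s\|_2^2}\;\dequal\;\frac{\sum_{i=1}^M g_i^2}{\sum_{i=1}^N g_i^2},\nonumber
\end{align}
with $g_i$ i.i.d.\ $N(0,1)$, and then apply standard $\chi^2$ concentration to the numerator (around $M$) and the denominator (around $N$). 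Since $M\le N$, the $\chi^2_M$ deviation is the binding one, giving
\begin{align}
\rvP\!\left((1-\delta)\frac{M}{N}\|s\|_2^2\le\|\bQ s\|_2^2\le(1+\delta)\frac{M}{N}\|s\|_2^2\right)\ge 1-2e^{-cM\delta^2}\nonumber
\end{align}
for an absolute constant $c>0$ and $\delta\in(0,1)$.

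It remains to combine the two steps. On the high-probability event above, $D(\bPhi)$ is sandwiched between $\frac{1}{2\sigma^2}(1-\delta)\frac{M}{N}\|s\|_2^2$ and $\frac{1}{2\sigma^2}(1+\delta)\frac{M}{N}\|s\|_2^2$. Since the asymptotic delay is monotone decreasing in $D(\bPhi)$, inserting the upper bound on $D(\bPhi)$ (the $1+\delta$ factor) yields the lower delay bound $\mbox{ADD}_{\ell}$, while the lower bound on $D(\bPhi)$ (the $1-\delta$ factor) yields the upper delay bound $\mbox{ADD}_u$, matching (\ref{eq:ADD_bounds}) exactly.

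I expect the concentration step to be the main obstacle. The delicate points are establishing that the unit-norm-rows / rank-$M$ hypothesis genuinely forces the row-space law to be isotropic enough for $\rvE[\bQ]=\frac{M}{N}\bI_N$ to hold, and then tracking constants so that the two-sided $\chi^2$ tail collapses to the single exponent $2e^{-cM\delta^2}$ rather than a sum of two competing rates. A secondary point is bookkeeping: the $(1+o(1))$ is a conditional $\alpha\to 0$ statement for a \emph{fixed} draw of $\bPhi$, whereas the $1-2e^{-cM\delta^2}$ guarantee refers to the one-time draw of $\bPhi$; keeping these two sources of fluctuation cleanly separated is what makes the final high-probability asymptotic claim rigorous.
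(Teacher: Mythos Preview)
Your proposal is correct and follows essentially the same route as the paper: reduce to the asymptotic Shiryaev formula $\mbox{ADD}(\nu_A)\sim |\log\alpha|/(D+|\log(1-\rho)|)$, identify $D=\frac{1}{2\sigma^2}\|\bQ s\|_2^2$, and then invoke Johnson--Lindenstrauss-type concentration for the random projection. The only cosmetic difference is that the paper first passes through an SVD to replace $\bPhi$ by a matrix $\check\bPhi$ with orthonormal rows (so that $\|\bQ s\|_2=\|\check\bPhi s\|_2$) and then cites the concentration bound directly, whereas you argue via the distributional identity $\|\bQ s\|_2^2/\|s\|_2^2\dequal\sum_{i\le M}g_i^2/\sum_{i\le N}g_i^2$ and $\chi^2$ tails; these are two phrasings of the same fact, and your caveat about needing isotropy of the row-space law is exactly the implicit assumption the paper also relies on.
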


\begin{proof}
By the asymptotic optimality of Shiryaev's procedure \cite{tart-veer-thpa-2005} we know that
\begin{align}
\mbox{ADD}(\nu_A)\sim\frac{|\log\alpha|}{D(f_1,f_0) + |\log(1-\rho)|}(1+o(1)),
\end{align}
since $Z_n^k$ converges to $D(f_1,f_0)$, the KL-divergence between $f_1$ and $f_0$. Since $f_1$ is $N(\bPhi s, \sigma^2\bPhi\bPhi^T)$ and $f_0$ is $N(0, \sigma^2\bPhi\bPhi^T)$, then,
\begin{align}
\mbox{ADD}(\nu_A)\sim\frac{|\log\alpha|}{\frac{1}{2\sigma^2}\|\bQ s\|_2^2 + |\log(1-\rho)|}(1+o(1)).
\label{eq:ADD_Gaussian}
\end{align}
The matrix $\bPhi$ has full row rank. By the reduced form of the SVD decomposition, we can write $\bPhi = \bU\mathbf{\bSigma} \bV^T$, where $\bU, \bV$ and $\bSigma$ are unitary, orthonormal and diagonal matrices, respectively. The matrix $\check\bPhi = \bSigma^{-1}\bU^T\bPhi$ has the same row space of $\bPhi$ and has orthonormal rows. Hence,
\begin{align}
\|\bQ s\|_2 & \hspace{-0.75mm}\buildrel(a)\over= \|\check\bPhi^T(\check\bPhi\check\bPhi^T)^{-1}{\check\bPhi}s\|_2\buildrel (b)\over= \|\check\bPhi^T\check\bPhi s\|_2 \hspace{-0.75mm}= \|\check\bPhi s\|_2.
\end{align}
(a) follows since $\bPhi$ and $\check\bPhi$ have the same row space and (b) follows since $\check\bPhi$ has orthonormal rows, i.e., $\check\bPhi\check\bPhi^T = I$.
Since $\check\bPhi$ is a random orthogonal projection, then $\|\check\bPhi s\|_2$ satisfies
\begin{equation}
(1-\delta)\frac{M}{N}\|s\|_2^2\leq \|\check\bPhi s\|_2^2\leq(1+\delta)\frac{M}{N}\|s\|_2^2
\end{equation}
with probability at least $1-2e^{-cM\delta^2}$ \cite{dasg-gupt-rsta-2003,dave-et-al-ieeestsp-2010}. The result follows.
\end{proof}
The next theorem, characterizes $\mbox{ADD}(\nu_A)$ when $\bPhi$ is drawn from a Gaussian ensemble.
\begin{theorem}
\label{thm:gaussian_matrices}
If $\bPhi$ has i.i.d. Gaussian zero-mean entries and $\rvE[\bPhi^T\bPhi] = \bI$, then for any fixed $s\in\calS$, $\nu_A$ in (\ref{eq:shiryaev}) satisfies
\begin{align}
\mbox{ADD}_{\ell}\leq\mbox{ADD}(\nu_A)\leq ADD_u, ~\mbox{as}~\alpha\rightarrow 0,
\end{align}
with probability at least $1-2e^{-cM\delta^2}$, for some constant $c > 0$ and $\delta \in (0,1)$, where $\mbox{ADD}_{\ell}$ and $\mbox{ADD}_u$ are as defined in (\ref{eq:ADD_bounds}).
\end{theorem}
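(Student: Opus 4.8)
The plan is to mirror the proof of Theorem~\ref{thm:general_matrices}, isolating the single place where the unit-norm-rows hypothesis was used and replacing it with an orthogonal-invariance argument specific to the Gaussian ensemble. First I would observe that the asymptotic characterization $\mbox{ADD}(\nu_A)\sim\frac{|\log\alpha|}{\frac{1}{2\sigma^2}\|\bQ s\|_2^2 + |\log(1-\rho)|}(1+o(1))$ carries over verbatim: it follows from the asymptotic optimality of Shiryaev's procedure together with the KL-divergence computation between $f_1 = N(\bPhi s, \sigma^2\bPhi\bPhi^T)$ and $f_0 = N(0,\sigma^2\bPhi\bPhi^T)$, and both ingredients hold for any fixed full-rank realization of $\bPhi$. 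Since $\bPhi$ has i.i.d.\ continuous (Gaussian) entries with $M\leq N$, it has rank $M$ almost surely, so $\bQ=\bPhi^T(\bPhi\bPhi^T)^{-1}\bPhi$ and the reduced SVD are well defined.

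The only remaining task is the two-sided concentration of $\|\bQ s\|_2^2$ about $\frac{M}{N}\|s\|_2^2$. Here the obstacle, and the sole substantive difference from Theorem~\ref{thm:general_matrices}, is that the rows of a Gaussian $\bPhi$ are \emph{not} unit-norm, so that result does not apply directly. The key observation I would exploit is that $\bQ$ is the orthogonal projection onto the row space $\calR(\bPhi)$ and therefore depends only on that subspace, not on the lengths of the rows. By the orthogonal invariance of the i.i.d.\ Gaussian law --- for any orthogonal $R$ one has $\bPhi R \dequal \bPhi$ --- the row space $\calR(\bPhi)$ is uniformly (Haar) distributed on the Grassmannian of $M$-dimensional subspaces of $\bbR^N$.

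With this in hand, the SVD reduction from the proof of Theorem~\ref{thm:general_matrices} applies unchanged: writing $\bPhi=\bU\bSigma\bV^T$ and $\check\bPhi=\bSigma^{-1}\bU^T\bPhi$, the matrix $\check\bPhi$ has orthonormal rows and the same row space, so $\|\bQ s\|_2=\|\check\bPhi s\|_2$. Because $\calR(\bPhi)$ is uniformly distributed, $\check\bPhi$ is a uniformly random orthonormal projection, and the Johnson--Lindenstrauss bound \cite{dasg-gupt-rsta-2003,dave-et-al-ieeestsp-2010} yields $(1-\delta)\frac{M}{N}\|s\|_2^2\leq\|\check\bPhi s\|_2^2\leq(1+\delta)\frac{M}{N}\|s\|_2^2$ with probability at least $1-2e^{-cM\delta^2}$. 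Substituting into the ADD expression gives the claimed bounds.

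I expect the orthogonal-invariance step to be the crux: it is what licenses recycling the concentration inequality of Theorem~\ref{thm:general_matrices} despite the non-unit rows, and it must be stated carefully (fixing $s$ and invoking invariance to rotate the subspace rather than the vector). As a cross-check one can bypass the subspace picture entirely: rotational invariance gives $s^T\bQ s \dequal \|s\|_2^2\,\bQ_{11}$ with $\bQ_{11}\sim\mathrm{Beta}(M/2,(N-M)/2)$ of mean $M/N$, and standard Beta-tail concentration reproduces the same two-sided bound, confirming the result independently of the SVD route.
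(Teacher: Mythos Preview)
Your proposal is correct and follows essentially the same route as the paper: both argue that the Gaussian row space is Haar-distributed by orthogonal invariance, identify $\|\bQ s\|_2$ with the norm under a uniformly random orthogonal projection, and then invoke the Johnson--Lindenstrauss concentration before substituting into the asymptotic ADD formula. Your treatment is, if anything, more explicit (spelling out the SVD reduction and the rank-$M$ a.s.\ point), and the Beta-distribution cross-check is a nice independent verification not present in the paper.
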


\begin{proof}
The matrix $\bPhi$ is Gaussian, thus it satisfies the subgaussian concentration inequality. In other words, for $\delta\in(0,1)$ and for any given $s\in\calS$
\begin{align}
(1-\delta)\|s\|_2^2\leq\|\bPhi s\|_2^2\leq (1+\delta)\|s\|_2^2,
\label{eq:subgaussian}
\end{align}
with probability greater than or equal to $1-2e^{-cM\delta^2}$ \cite{dasg-gupt-rsta-2003,dave-et-al-ieeestsp-2010}. The row space of $\bPhi$ has a uniformly distributed orientation. As such, $\|\bQ  s\|_2$ is distributed as $\|\bQ  s\|_2$ for a random orthogonal projection. Thus, $\sqrt{\frac{N}{M}}\bQ$ satisfies the subgaussian concentration inequality, i.e.,
\begin{equation}
(1-\delta)\frac{M}{N}\|s\|_2^2\leq \|\bQ s\|_2^2\leq(1+\delta)\frac{M}{N}\|s\|_2^2.
\end{equation}
Replacing in (\ref{eq:ADD_Gaussian}), Theorem \ref{thm:gaussian_matrices} follows.
\end{proof}

\subsection{Detection delay and compression ratio}
The result of Theorems \ref{thm:general_matrices} and \ref{thm:gaussian_matrices} quantify the effect of the compression ratio $\gamma = \frac{M}{N}$ and the $\mbox{SNR} = \frac{\|s\|^2}{\sigma^2}$ on the detection delay. In particular, with the aforementioned probability the delays ratio $r$ of the average detection delay with compression to that without compression for the settings of Theorems \ref{thm:general_matrices} and \ref{thm:gaussian_matrices} satisfies
\begin{align}
r_{\ell}\leq r\leq r_u
\label{eq:ratioADDs}
\end{align}
where,
\begin{align}
r_{\ell} &=\frac{\mbox{SNR} + 2|\log(1-\rho)|}{\gamma(1+\delta)\mbox{SNR} + 2|\log(1-\rho)|}(1+o(1)) \nonumber\\
r_u &= \frac{\mbox{SNR} + 2|\log(1-\rho)|}{\gamma(1-\delta)\mbox{SNR} + 2|\log(1-\rho)|}(1+o(1)).
\end{align}

\section{Compressed Change Detection of Sparse Phenomena}
\label{sec:ccd_sparse}
In this section, we particularly focus on the special case where $\calS$ is the set of sparse signals of order $K$, i.e.,
\begin{align}
\calS = \{s\in \bbR^N: \|s\|_0\leq K\}.
\label{eq:set_of_sparse}
\end{align}
The goal is to determine the number $M$ of measurements needed to achieve a target delay ratio $r\leq r_0$ with probability $\geq 1-\beta$. The result follows directly from the previous analysis and the concentration of random matrices. We can readily state the following theorem.
\begin{theorem}
\label{thm:target_delay_ratio}
Consider the setup of Theorem \ref{thm:gaussian_matrices}. For $\delta,\beta\in(0,1)$, and a number of measurements $M$ satisfying
\begin{align}
M\geq\max(M_1,M_2),
\end{align}
the delay ratio $r$ is such that $r\leq r_0$ with probability at least $1-\beta$, where
\begin{align}
M_1 &= 2\frac{K\log(\frac{42}{\delta})+\log(\frac{2}{\beta})}{c\delta^2}, ~ \mbox{and} \label{eq:M_for_RIP}\\
M_2 &= \frac{N}{r_0(1-\delta)}\left(1-\frac{2(r_0-1)}{\mbox{SNR}}|\log(1-\rho)|\right)\label{eq:M_for_ratio}
\end{align}
\end{theorem}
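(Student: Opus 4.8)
The plan is to split the claim into a \emph{deterministic} requirement and a \emph{probabilistic} one, and to show that $M\ge M_2$ settles the former while $M\ge M_1$ settles the latter. Recall from \eqref{eq:ratioADDs} that, on the event that the concentration inequality underlying Theorem \ref{thm:gaussian_matrices} holds for the realized signal, the delay ratio satisfies $r\le r_u$. Hence it suffices to (i) force the deterministic upper bound $r_u$ below the target $r_0$, and (ii) guarantee that the concentration event holds, for every $s\in\calS$, with probability at least $1-\beta$. Each of these translates into a lower bound on $M$, and taking $M\ge\max(M_1,M_2)$ secures both at once.

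For the deterministic part I would drop the $(1+o(1))$ factor (we are in the regime $\alpha\to0$) and impose $r_u\le r_0$ directly on the expression in \eqref{eq:ratioADDs}. Writing $\gamma=M/N$, the inequality
\[
\frac{\mbox{SNR}+2|\log(1-\rho)|}{\gamma(1-\delta)\,\mbox{SNR}+2|\log(1-\rho)|}\le r_0
\]
is equivalent, after clearing the positive denominator and grouping the $\mbox{SNR}$ and $|\log(1-\rho)|$ terms, to a linear lower bound on $\gamma$. Substituting $\gamma=M/N$ and solving for $M$ reproduces exactly $M\ge M_2$ as displayed in \eqref{eq:M_for_ratio}. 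This is routine algebra and I anticipate no difficulty; note that only the lower concentration bound $\|\bQ s\|_2^2\ge(1-\delta)\tfrac{M}{N}\|s\|_2^2$ is actually invoked here, since a larger projected energy can only shrink the delay and hence the ratio.

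For the probabilistic part the subtlety is that Theorem \ref{thm:gaussian_matrices} gives the two-sided bound on $\|\bQ s\|_2^2$ only for a single fixed $s$, whereas here $s$ is a $K$-sparse signal from the set $\calS$ in \eqref{eq:set_of_sparse}. I would upgrade the per-signal concentration to a restricted-isometry-type guarantee by the standard covering argument: on the unit sphere of the relevant $K$-dimensional coordinate subspace build a net of resolution proportional to $\delta$ (resolution $\delta/14$ yields at most $(3\cdot14/\delta)^K=(42/\delta)^K$ points), apply the single-vector bound $2e^{-cM\delta^2}$ at each net point, take a union bound over the net, and extend from the net to the entire sphere at the cost of an absolute constant. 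Requiring the resulting failure probability to be at most $\beta$ gives a condition of the form $cM\delta^2\ge 2\big(K\log(42/\delta)+\log(2/\beta)\big)$, i.e. $M\ge M_1$ as in \eqref{eq:M_for_RIP}, where the factor $2$ absorbs the bookkeeping of the concentration parameter in the net-to-sphere extension.

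The main obstacle is precisely this uniformization step: one must verify that the covering number of the sparse unit sphere produces the $K\log(42/\delta)$ exponent and that the net-to-sphere extension preserves the isometry constant up to the stated factor, so that the single-signal concentration of Theorem \ref{thm:gaussian_matrices} is promoted to a bound holding for the realized $s$ on one high-probability event. Once this is in place, on that event $r\le r_u$ by the concentration bound and $r_u\le r_0$ by the deterministic step, so $r\le r_0$ with probability at least $1-\beta$ whenever $M\ge\max(M_1,M_2)$, which is the assertion of the theorem.
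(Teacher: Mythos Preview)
Your proposal is correct and follows essentially the same two-part decomposition as the paper: $M\ge M_2$ forces the deterministic bound $r_u\le r_0$ via the algebra on \eqref{eq:ratioADDs}, and $M\ge M_1$ secures the concentration event uniformly over $K$-sparse signals with probability at least $1-\beta$. The only difference is in level of detail: the paper simply cites the literature for the fact that $M\ge M_1$ yields the required uniform concentration, whereas you spell out the covering-plus-union-bound argument that produces the $K\log(42/\delta)$ exponent.
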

\begin{proof}
Given $\delta\in(0,1)$ and if $M\geq M_1$, then $\bPhi$ and $\sqrt{\frac{N}{M}}\bQ$ satisfy the subgaussian concentration inequality property with probability at least $1-\beta$ \cite{dave-et-al-ieeestsp-2010}. Replacing in the RHS of (\ref{eq:ratioADDs}), the target delay ratio $r_0$ is met if $M>M_2$. The result follows.
\end{proof}
The following theorem establishes a general result for the detection delay based on sensing matrices that satisfy the RIP property \cite{cande-icm-2006,cande-romb-tao-pam-2006}.
\begin{theorem}
\label{thm:RIP_matrices}
If the matrix $\bPhi$ satisfies the RIP property of order K and constant $\delta$ \cite{cande-romb-tao-pam-2006}, and $\bG=\bPhi^T\bPhi$ is the Gram matrix, then $ADD(\nu_A)$ satisfies
\begin{align}
\mbox{ADD}_{\ell}\leq\mbox{ADD}(\nu_A)\leq ADD_u, ~\mbox{as}~\alpha\rightarrow 0,
\end{align}
where,
\begin{align}
\mbox{ADD}_{\ell} &= \frac{|\log\alpha|}{\frac{\mbox{SNR}}{2\lambda_{\min}(\bG)}(1+\delta) + |\log(1-\rho)|}(1+o(1))\nonumber\\
\mbox{ADD}_u &= \frac{|\log\alpha|}{\frac{\mbox{SNR}}{2\lambda_{\max}(\bG)}(1-\delta) + |\log(1-\rho)|}(1+o(1)),
\end{align}
and $\lambda_{\min}(\bG)$ and $\lambda_{\max}(\bG)$ denote the minimum and maximum eigenvalues of $\bG$, respectively.
\end{theorem}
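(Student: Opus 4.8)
The plan is to reduce the RIP case to the very same asymptotic ADD expression used in the proofs of Theorems~\ref{thm:general_matrices} and~\ref{thm:gaussian_matrices}, and then to sandwich the Kullback--Leibler divergence $D(f_1,f_0)=\frac{1}{2\sigma^2}\|\bQ s\|_2^2$ between two \emph{deterministic} bounds driven by the RIP constant $\delta$ and the extreme eigenvalues of $\bG$. By the asymptotic optimality of Shiryaev's procedure I start from
\begin{align}
\mbox{ADD}(\nu_A)\sim\frac{|\log\alpha|}{\frac{1}{2\sigma^2}\|\bQ s\|_2^2 + |\log(1-\rho)|}(1+o(1)),\nonumber
\end{align}
exactly as in~(\ref{eq:ADD_Gaussian}); this holds verbatim because it relied only on $f_1=N(\bPhi s,\sigma^2\bPhi\bPhi^T)$, $f_0=N(0,\sigma^2\bPhi\bPhi^T)$ and made no use of the random construction. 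Hence the whole theorem follows once $\|\bQ s\|_2^2$ is bounded, and since $\mbox{ADD}(\nu_A)$ is \emph{decreasing} in the divergence, an upper bound on $\|\bQ s\|_2^2$ yields $\mbox{ADD}_\ell$ while a lower bound yields $\mbox{ADD}_u$.

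First I would rewrite the divergence as a quadratic form living on the $M$-dimensional row space. Since $\bQ=\bPhi^T(\bPhi\bPhi^T)^{-1}\bPhi$ is the orthogonal projection onto $\calR(\bPhi)$ and is idempotent and symmetric,
\begin{align}
\|\bQ s\|_2^2 = s^T\bQ s = (\bPhi s)^T(\bPhi\bPhi^T)^{-1}(\bPhi s).\nonumber
\end{align}
Writing $u=\bPhi s$ and applying the Rayleigh-quotient bounds to the symmetric positive-definite matrix $(\bPhi\bPhi^T)^{-1}$ gives
\begin{align}
\frac{\|\bPhi s\|_2^2}{\lambda_{\max}(\bPhi\bPhi^T)}\leq\|\bQ s\|_2^2\leq\frac{\|\bPhi s\|_2^2}{\lambda_{\min}(\bPhi\bPhi^T)}.\nonumber
\end{align}
Because $\bPhi\bPhi^T$ and $\bG=\bPhi^T\bPhi$ share the same nonzero spectrum, $\lambda_{\max}(\bPhi\bPhi^T)=\lambda_{\max}(\bG)$ and $\lambda_{\min}(\bPhi\bPhi^T)$ equals the smallest nonzero eigenvalue of $\bG$, which I identify with $\lambda_{\min}(\bG)$ in the statement.

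Next I would invoke the RIP. Since $s\in\calS$ is $K$-sparse and $\bPhi$ satisfies the RIP of order $K$ with constant $\delta$, we have $(1-\delta)\|s\|_2^2\leq\|\bPhi s\|_2^2\leq(1+\delta)\|s\|_2^2$. Substituting into the sandwich above yields
\begin{align}
\frac{(1-\delta)\|s\|_2^2}{\lambda_{\max}(\bG)}\leq\|\bQ s\|_2^2\leq\frac{(1+\delta)\|s\|_2^2}{\lambda_{\min}(\bG)},\nonumber
\end{align}
so that, dividing by $2\sigma^2$ and recalling $\mbox{SNR}=\|s\|_2^2/\sigma^2$, the divergence is trapped between $\frac{(1-\delta)\mbox{SNR}}{2\lambda_{\max}(\bG)}$ and $\frac{(1+\delta)\mbox{SNR}}{2\lambda_{\min}(\bG)}$. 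Plugging the upper bound into the asymptotic ADD expression produces $\mbox{ADD}_\ell$ and the lower bound produces $\mbox{ADD}_u$, which completes the argument.

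The step I expect to require the most care is the eigenvalue bookkeeping in the middle paragraph: $\bG$ is $N\times N$ of rank $M<N$, so its literal smallest eigenvalue is zero, and one must interpret $\lambda_{\min}(\bG)$ as the smallest nonzero eigenvalue (equivalently $\lambda_{\min}(\bPhi\bPhi^T)$) for the inverse quadratic form to be meaningful. I would also be careful that the RIP bound on $\|\bPhi s\|_2^2$ and the eigenvalue bound on $(\bPhi\bPhi^T)^{-1}$ are logically independent ingredients — the former controls the numerator while the latter controls the conditioning of $\bPhi\bPhi^T$ — so both $\delta$ and the eigenvalues of $\bG$ legitimately survive in the final bounds rather than collapsing into a single factor.
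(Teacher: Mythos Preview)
Your argument is correct and reaches the same sandwich on $\|\bQ s\|_2^2$ as the paper, but by a cleaner route. The paper passes through the SVD: it builds $\check\bPhi=\bSigma^{-1}\bU^T\bPhi$ with orthonormal rows, writes $\|\bQ s\|_2^2=\|\check\bPhi s\|_2^2$, and then bounds this via the eigenvalues of the \emph{restricted} Gram matrices $\bG(\check\bPhi,J)$ and $\bG(\bPhi,J)$, finally invoking RIP as the eigenvalue condition $\lambda_{\min}(\bG(\bPhi,J))\geq 1-\delta$, $\lambda_{\max}(\bG(\bPhi,J))\leq 1+\delta$. You instead expand $\|\bQ s\|_2^2=(\bPhi s)^T(\bPhi\bPhi^T)^{-1}(\bPhi s)$ directly, apply a Rayleigh-quotient bound to $(\bPhi\bPhi^T)^{-1}$, and use RIP in its norm form on $\|\bPhi s\|_2^2$. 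Your path avoids the SVD and the support-indexed submatrices entirely, which makes it shorter; the paper's decomposition, on the other hand, isolates exactly where the sparsity enters (through $\bG(\bPhi,J)$) and is what sets up the subsequent Toeplitz corollary, where bounds on the entries of $\bG$ are used. Your remark that $\lambda_{\min}(\bG)$ must be read as the smallest \emph{nonzero} eigenvalue (equivalently $\lambda_{\min}(\bPhi\bPhi^T)$) is well taken and is implicit in the paper's argument as well.
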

\begin{proof}
Let $J = \{i\in [N]: s_i\ne 0\}$, denote the support set of $s$, where $s_i$ is the $i$-th entry of the vector $s$ and $[N]:=\{1,\ldots,N\}$. For a matrix $\bA$, let $\bG(\bA, J)$ denote the Gram matrix $\bA_J^T\bA_J$, where $\bA_J$ denotes the submatrix of $\bA$ with columns indexed by the set $J$. Hence, 
\begin{align}
\|\bQ s\|_2^2 &= \|\check\bPhi s\|_2^2\nonumber\\
&\geq\lambda_{\min}(\bG(\check\bPhi,J))\|s\|_2^2\nonumber\\
&\geq\frac{\lambda_{\min}(\bG(\bPhi,J))\|s\|_2^2}{\lambda_{\max}(\bG(\bPhi,[N]))}.
\label{eq:eig_bound}
\end{align}
The equality follows as before from the common row space and the orthonormal rows property. The first inequality follows since $s\in\calS$ and $J$ is the support set, while the second inequality follows from the SVD of $\check\bPhi$.
Similarly, we can prove an upper bound to get that
\begin{align}
\frac{\lambda_{\min}(\bG(\bPhi,J))\|s\|_2^2}{\lambda_{\max}(\bG)}\leq \|\bQ s\|_2 \leq \frac{\lambda_{\max}(\bG(\bPhi,J))\|s\|_2^2}{\lambda_{\min}(\bG)}.
\end{align}
The matrix $\bPhi$ satisfies the RIP property of order $K$ and $|J|\leq K$, which is equivalent to the requirement
\[
\lambda_{\min}(\bG(\bPhi,J))\geq 1-\delta ~~\mbox{and}~~\lambda_{\max}(\bG(\bPhi,J))\leq 1+\delta.
\]
This completes the proof.
\end{proof}

\subsection{Compressed change detection via wireless channels}
In various sensing applications, the sensor measurements are transmitted to a central unit via a wireless channel for further processing. In such cases, the goal may be to detect a change based on measurements received at the central node. These measurements are the result of the convolution of the transmitted signals with the channel impulse
response and hence can be represented in matrix form with the matrix $\bPhi$ being a Toeplitz matrix. As a direct application of Theorem \ref{thm:RIP_matrices}, our next result establishes a bound on the detection delay of the Shiryaev procedure in such settings based on the known RIP properties of Toeplitz matrices \cite{haup-et-al-ieeetit-2010}.
\begin{corollary}
\label{thm:toeplitz}
If $\bPhi$ is is an $M\times N$ Toeplitz matrix, with all distinct entries $\Phi_i$ i.i.d., Gaussian with zero mean and $\rvE[\Phi_i^2] = 1/M$, then for any $\delta\in(0,1),~\exists$ constants $c_1, c_2 > 0$ such that
\begin{align}
\mbox{ADD}(\nu_A)\leq  \frac{|\log\alpha|}{\frac{\mbox{SNR}}{2}\frac{1-\delta}{1+\delta\frac{N}{K}} + |\log(1-\rho)|}(1+o(1))
\end{align}
with probability at least $1-e^{-c_1 M/K^2}$, when $M > c_2 K^2\log N$.
\end{corollary}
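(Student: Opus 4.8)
The plan is to derive this as a direct specialization of Theorem \ref{thm:RIP_matrices}, feeding in the known restricted isometry behavior of random Toeplitz matrices. First I would invoke the result of Haupt et al.~\cite{haup-et-al-ieeetit-2010}: a Gaussian Toeplitz matrix whose distinct entries are i.i.d.\ $N(0,1/M)$ satisfies the RIP of order $K$ (in fact of order $2K$, at the same order of cost) with constant $\delta$ with probability at least $1-e^{-c_1 M/K^2}$, provided $M>c_2 K^2\log N$. The quadratic scaling $M=\Theta(K^2\log N)$, in contrast to the linear-in-$K$ cost of i.i.d.\ ensembles, is precisely the signature of the Toeplitz constraint and is what produces the $K^2$ in both the sample-size condition and the exponent of the probability bound.

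Next, conditioning on the RIP event, Theorem \ref{thm:RIP_matrices} applies verbatim and yields the upper bound
\[
\mbox{ADD}(\nu_A)\leq \frac{|\log\alpha|}{\frac{\mbox{SNR}}{2\lambda_{\max}(\bG)}(1-\delta)+|\log(1-\rho)|}(1+o(1)),
\]
so the entire task reduces to controlling $\lambda_{\max}(\bG)$ for the \emph{full} $N\times N$ Gram matrix $\bG=\bPhi^T\bPhi$, rather than only its $K\times K$ principal submatrices.

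The crux, and the step I expect to be the main obstacle, is to show $\lambda_{\max}(\bG)\leq 1+\delta\frac{N}{K}$ from a hypothesis that only constrains $K$-sparse vectors; RIP says nothing directly about dense vectors. I would pass through a block decomposition: partition $[N]$ into $L=\lceil N/K\rceil$ index blocks $T_1,\dots,T_L$ of size at most $K$ and write $\bG$ in the induced block form with blocks $\bG_{jl}=\bPhi_{T_j}^T\bPhi_{T_l}$. The RIP of order $K$ forces each diagonal block to satisfy $\lambda_{\max}(\bG_{jj})\leq 1+\delta$, while the standard near-orthogonality consequence of RIP (for vectors with disjoint supports whose union has size at most $2K$) forces $\|\bG_{jl}\|_{2\to 2}\leq\delta$ for $j\neq l$. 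A block Gershgorin argument then gives
\[
\lambda_{\max}(\bG)\leq (1+\delta)+(L-1)\delta = 1+\delta L,
\]
with $L=\lceil N/K\rceil$, i.e.\ $\lambda_{\max}(\bG)\leq 1+\delta\frac{N}{K}$ when $K\mid N$ and up to the ceiling otherwise; the factor $N/K$ emerges exactly as the number of blocks, which is the source of the $N/K$ in the statement.

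Finally, since $x\mapsto (1-\delta)/x$ is decreasing, substituting the bound $\lambda_{\max}(\bG)\leq 1+\delta\frac{N}{K}$ into the denominator of the displayed ADD inequality replaces $\frac{1-\delta}{\lambda_{\max}(\bG)}$ by the smaller quantity $\frac{1-\delta}{1+\delta N/K}$, which only enlarges the bound and produces exactly the claimed expression; carrying along the probability $1-e^{-c_1 M/K^2}$ and the condition $M>c_2 K^2\log N$ from the first step then completes the argument. The one point requiring care in the write-up is the bookkeeping on RIP order, since the cross-block inner-product estimate needs isometry of order $2K$ rather than $K$; because this only rescales the absolute constants $c_1,c_2$, the final form of the corollary is unaffected.
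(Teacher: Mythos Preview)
Your argument is correct, but the route to the key estimate $\lambda_{\max}(\bG)\leq 1+\delta N/K$ differs from the paper's. You treat RIP (of order $2K$) as a black box: partition $[N]$ into $L=\lceil N/K\rceil$ blocks, bound the diagonal blocks by $1+\delta$ via RIP and the off-diagonal blocks by $\delta$ via the standard disjoint-support near-orthogonality consequence of RIP, then apply a block Gershgorin/quadratic-form bound to get $\lambda_{\max}(\bG)\leq 1+\delta L$. The paper instead works entrywise: it applies the scalar Ger\v{s}gorin theorem, $\lambda_{\max}(\bG)\leq \max_i G_{i,i}+(N-1)\max_{i\neq j}|G_{i,j}|$, and then invokes the concentration bounds from \cite{haup-et-al-ieeetit-2010} directly on the diagonal and off-diagonal entries of $\bG$, namely $|G_{i,i}-1|\leq\delta/K$ and $|G_{i,j}|\leq\delta/K$ with the stated probability, which yields $\lambda_{\max}(\bG)\leq 1+\delta N/K$ without a ceiling. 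Your approach is more modular and would apply to any ensemble with RIP of order $2K$; the paper's approach reaches into the specific entrywise inequalities that underlie the Toeplitz RIP proof, avoids the order-$2K$ bookkeeping, and gives the $N/K$ factor exactly rather than up to a ceiling. Both lead to the same conclusion after absorbing into the constants $c_1,c_2$.
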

\begin{proof}
In \cite{haup-et-al-ieeetit-2010}, it was shown that sufficiently large random Toeplitz matrices satisfy the RIP property with high probability. Hence, by Theorem \ref{thm:RIP_matrices} we only need to upper bound the largest eigenvalue of the matrix $\bG$. Bounds based on Ger\v{s}gorin circle theorem \cite{varg-book-2004} were derived in \cite{haup-et-al-ieeetit-2010}. To upper bound the maximum eigenvalue of $\bG = \bG(\bPhi,[N])$, first note that
\begin{align}
\lambda_{\max}(\bG)&\leq \max_{i \in [N]} G_{i,i} + \max_{i\in [N]}\sum_{\substack{j=1 \\ j\ne i}}^N |G_{i,j}|\\
&\leq \max_{i \in [N]} G_{i,i} + (N-1)\max_{i,j\in [N]} |G_{i,j}|,
\label{eq:gersh_thm}
\end{align}
by Ger\v{s}gorin circle theorem. Using the concentration bounds in \cite{haup-et-al-ieeetit-2010} on the entries of the Gram matrix, we have that
\begin{align}
P\left\{|\max_i G_{i,i}-1|\geq \frac{\delta}{K}\right\}&\leq 2N\exp\left(\frac{-\delta_1 M}{K^2}\right)\nonumber\\
P\left\{\max_{i,j}|G_{i,j}|\geq\frac{\delta}{K}\right\}&\leq 2N^2 \exp\left(\frac{-\delta_2 M}{K^2}\right)
\label{eq:concent_eigs}
\end{align}
for some constants $\delta_1, \delta_2 > 0$.
Combining (\ref{eq:concent_eigs}) and (\ref{eq:gersh_thm})
\begin{align}
\lambda_{\max}(\bG)\leq 1+\delta\frac{N}{K}
\end{align}
with probability $\geq 1-e^{-c_1 M/K^2}$, when $M > c_2 K^2\log N$. Under this condition, $\bPhi$ was shown to satisfy the RIP property \cite{haup-et-al-ieeetit-2010} establishing the result of Theorem \ref{thm:toeplitz}.
\end{proof}

\section{Numerical and Simulation Results}
\label{sec:results}
\begin{figure}
  \centering
  \includegraphics[width = 0.5\textwidth]{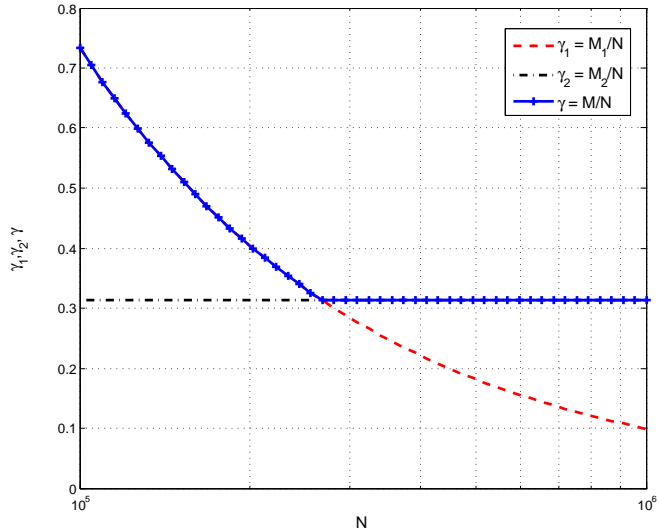}
  \vspace{-0.2cm}
  \caption{Compression ratio for a target delay ratio.}
  \label{fig:comp_ratio_vs_N}
\end{figure}
Theorem \ref{thm:target_delay_ratio} establishes a sufficient condition on the compression ratio $\gamma = M/N$ needed to achieve a target delay ratio $r_0$ with high probability. Fig. \ref{fig:comp_ratio_vs_N} shows the ratios $\gamma_1 = M_1/N$, $\gamma_2 = M_2/N$ and $\gamma = M/N$, as a function of the signal dimension $N$, for $r_0 = 4$, $\rho = 0.1$, SNR $=25$ dB, and $\beta = 0.1$. The size of the support $K$ is chosen to scale logartithmically with $N$. As shown, for large enough $N$, the curve corresponding to $\gamma_1$ (ensuring the subgaussian property), is dominated by the constant $\gamma_2$, which determines the compression gain. We note that the results hold irrespective of the choice of the signal $s\in\calS$ as we do not consider matching the matrix $\bPhi$ to $s$.

Fig. \ref{fig:ADD_PFA} displays the simulated tradeoff between the average delay of Shiryaev's procedure, $\mbox{ADD}(\nu_A)$, and the probability of false alarm, $\mbox{PFA}$, with compression, and the derived upper and lower bounds. The theoretical analysis of Theorem \ref{thm:gaussian_matrices} is shown to match the simulations. The results were obtained for $\rho = 0.1, N = 100, \delta = 0.5$ and $\mbox{SNR}$ = 5dB. $M$ was chosen to ensure that $\beta \leq 0.1$. Fig. \ref{fig:ADD_gamma} shows the average delay of Shiryaev's procedure as a function of the compression ratio $\gamma$, together with the theoretical upper and lower bounds.

\begin{figure}
        \centering
        \begin{subfigure}[b]{0.4\textwidth}
                \centering
                \includegraphics[width=\textwidth]{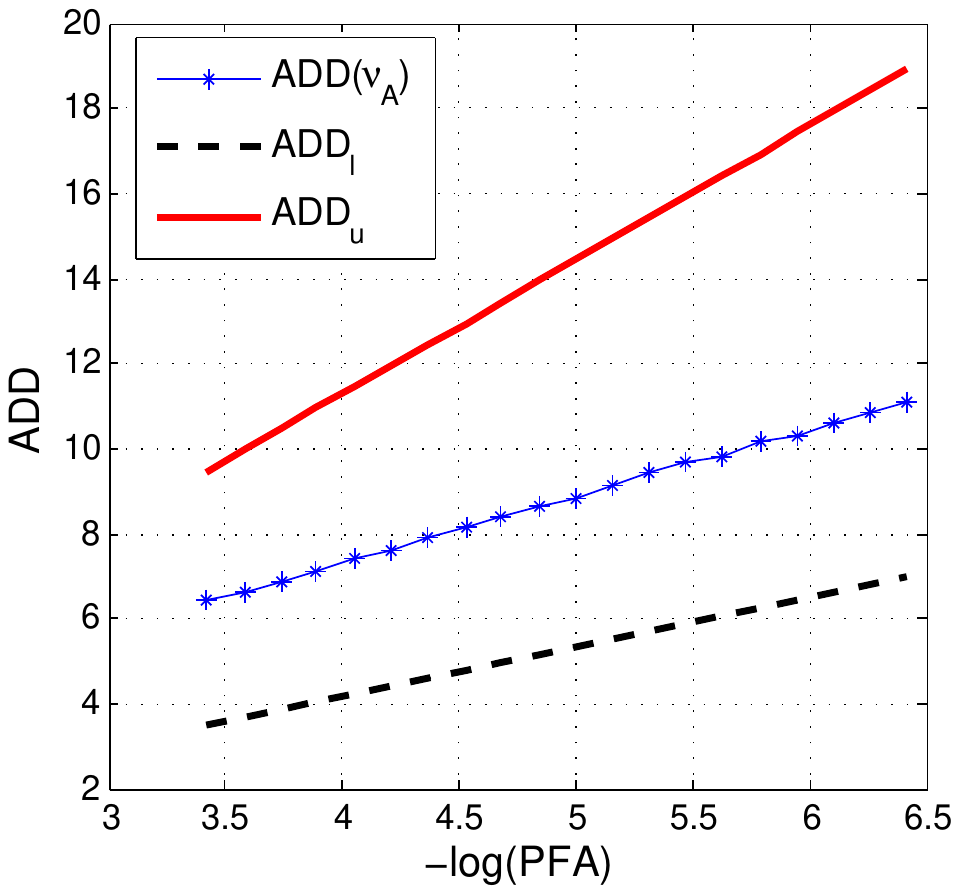}
                \caption{}
                \label{fig:ADD_PFA}
        \end{subfigure}%
        \begin{subfigure}[b]{0.405\textwidth}
                \centering
                \includegraphics[width=\textwidth]{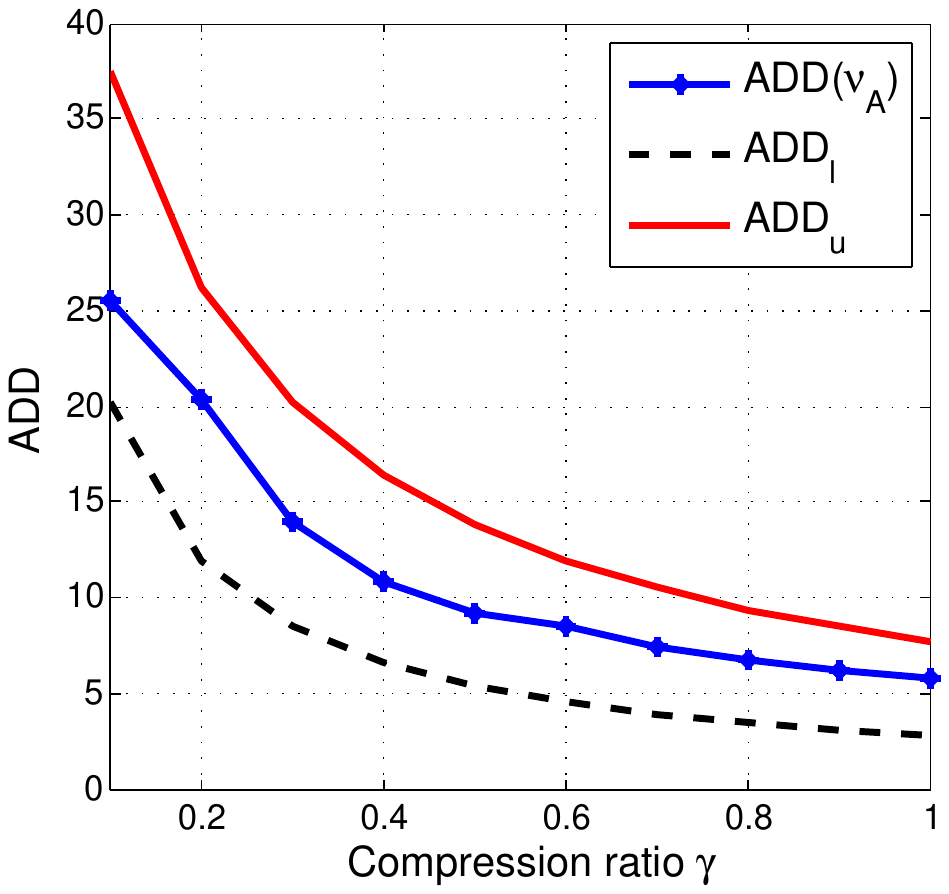}
                \caption{}
                \label{fig:ADD_gamma}
        \end{subfigure}
        \caption{(a) Average Detection Delay (ADD)-False alarm tradeoff of Shiryaev's procedure with compression and theoretical upper and lower bounds, (b) Average Detection Delay (ADD) vs the compression ratio $\gamma$ and theoretical bounds.}\label{fig:example2}
\end{figure}

\section{Conclusions}
\label{sec:conc}
We studied the problem of Bayesian change detection when the decision maker only has access to compressive measurements. We derived an expression for the average detection delay of Shiryaev's procedure with compressive measurements when the probability of false alarm is sufficiently small. We quantified the dependence of the delay on the compression ratio with various matrix constructions, including Gaussian ensembles and random projections, and derived upper and lower bounds on the average detection delay with compressive measurements. It was shown that the delay/false alarm tradeoff with compressive measurements depends on the projection on the row space of the sensing matrix, which admits a favorable concentration of measure for different sensing settings of interest.
\bibliographystyle{IEEEtran}
\bibliography{ccd_refs}


\end{document}